\documentclass[11pt]{article}

\usepackage[T1]{fontenc}
\usepackage[utf8]{inputenc}
\usepackage{lmodern}
\usepackage{microtype}
\usepackage[a4paper,margin=1in]{geometry}
\usepackage{amsmath,amssymb,amsthm,mathtools}
\usepackage{xcolor}
\usepackage{hyperref}
\usepackage{enumitem}

\hypersetup{
  colorlinks=true,
  linkcolor=blue!50!black,
  citecolor=blue!50!black,
  urlcolor=blue!50!black,
  pdftitle={Three Fixed-Dimension Satisfiability Semantics for Quantum Logic},
  pdfauthor={Anonymous}
}

\theoremstyle{plain}
\newtheorem{theorem}{Theorem}[section]

\newtheorem{lemma}[theorem]{Lemma}
\newtheorem{corollary}[theorem]{Corollary}

\theoremstyle{definition}
\newtheorem{definition}[theorem]{Definition}
\newtheorem{remark}[theorem]{Remark}
\newtheorem{problem}[theorem]{Problem}

\newcommand{\STD}{\mathrm{STD}}
\newcommand{\COM}{\mathrm{COM}}
\newcommand{\PBA}{\mathrm{PBA}}
\newcommand{\Sat}{\mathrm{Sat}}
\newcommand{\Proj}{\mathrm{Proj}}
\newcommand{\Ran}{\mathrm{Ran}}
\newcommand{\Var}{\mathrm{Var}}
\newcommand{\SEPone}{\mathrm{SEP}\text{-}1}
\newcommand{\sem}[1]{\left[\!\left[#1\right]\!\right]}
\newcommand{\field}{\mathbb{F}}

\title{Three Fixed-Dimension Satisfiability Semantics for Quantum Logic\\[0.35em]
\large Implications and an Explicit Separator}
\author{Joaquim Reizi Higuchi\\
Program in Nature and Environment, Graduate School of Arts and Sciences\\
The Open University of Japan, Chiba, Japan\\
\texttt{1218237360@campus.ouj.ac.jp}}
\date{March 6, 2026}

\begin{document}
\maketitle

\begin{abstract}
We compare three satisfiability notions for propositional formulas in the language
$\{\neg,\wedge,\vee\}$ over a fixed finite-dimensional Hilbert space $H=\field^d$
with $\field\in\{\mathbb{R},\mathbb{C}\}$.
The first is the standard Hilbert-lattice semantics on the subspace lattice $L(H)$,
where meet and join are total operations.
The second is a global commuting-projector semantics, where all atoms occurring in the formula are interpreted by a single pairwise-commuting projector family.
The third is a local partial-Boolean semantics, where binary connectives are defined only on commeasurable pairs and definedness is checked nodewise along the parse tree.

We prove, for every fixed $d\geq 1$,
\[
\Sat_{\COM}^d(\varphi) \Longrightarrow \Sat_{\PBA}^d(\varphi) \Longrightarrow \Sat_{\STD}^d(\varphi)
\]
for every formula $\varphi$.
We then exhibit the explicit formula
\[
\SEPone := (p\wedge(q\vee r))\wedge \neg\bigl((p\wedge q)\vee(p\wedge r)\bigr)
\]
which is satisfiable in the standard semantics for every $d\geq 2$, but unsatisfiable under both the global commuting and the partial-Boolean semantics.
Consequently, for every $d\geq 2$, the satisfiability classes satisfy
\[
\mathsf{SAT}_{\COM}^d \subseteq \mathsf{SAT}_{\PBA}^d \subsetneq \mathsf{SAT}_{\STD}^d,
\qquad
\mathsf{SAT}_{\COM}^d \subsetneq \mathsf{SAT}_{\STD}^d,
\]
while the exact relation between $\mathsf{SAT}_{\COM}^d$ and $\mathsf{SAT}_{\PBA}^d$ remains open.
The point of the paper is semantic comparison, not a new feasibility reduction or a generic translation theorem.
\end{abstract}

\section{Introduction}

There are at least three natural ways to interpret the propositional connectives
$\neg,\wedge,\vee$ in finite-dimensional quantum logic.
The first is the standard Hilbert-lattice semantics on the lattice of subspaces of a Hilbert space.
The second restricts attention to a single commuting family of projectors, so that the connectives are interpreted inside a Boolean block.
The third uses the partial Boolean algebra of projectors and allows binary connectives only on commeasurable pairs.
These three semantics are closely related, but they are not the same, and conflating them obscures what is actually being proved.

This distinction matters because several adjacent lines of prior work already occupy nearby territory.
Herrmann showed that, for finite-height complemented and orthocomplemented modular lattices, satisfiability and the complement of equivalence are polynomial-time equivalent to algebraic feasibility in the subspace-lattice case of dimension at least three~\cite{Herrmann2022}.
Thus a generic claim of novelty based only on fixed-dimension feasibility reduction for the standard Hilbert-lattice semantics would be misplaced.
Herrmann and Ziegler developed substantial definability and translation machinery for finite-dimensional subspace lattices with involution~\cite{HerrmannZiegler2018,HerrmannZiegler2019}, so generic field-to-lattice translation claims are likewise not new.
On the other hand, Dawar and Shah recently proved strong complexity theorems for satisfiability in Kochen-Specker partial Boolean algebras, including existential-theory-of-the-reals completeness in fixed dimension for the relevant real and complex cases~\cite{DawarShah2026}.
That result concerns the partial-Boolean semantics rather than the standard Hilbert-lattice semantics, so it should be treated as a comparison target, not as a substitute.

The present paper makes a narrower, cleaner contribution.
We formalize the three semantics precisely, prove the straightforward implication chain
\[
\COM \Rightarrow \PBA \Rightarrow \STD,
\]
and then separate the standard semantics from the other two by one explicit formula.
The remaining unresolved comparison is whether the local partial-Boolean semantics is genuinely more permissive than the global commuting semantics.

Our main results are the following.

\begin{enumerate}[label=(\arabic*),leftmargin=2.1em]
  \item For every $d\geq 1$ and every formula $\varphi$,
  \[
  \Sat_{\COM}^d(\varphi) \Longrightarrow \Sat_{\PBA}^d(\varphi) \Longrightarrow \Sat_{\STD}^d(\varphi).
  \]
  \item For every $d\geq 2$, the formula
  \[
  \SEPone := (p\wedge(q\vee r))\wedge \neg\bigl((p\wedge q)\vee(p\wedge r)\bigr)
  \]
  satisfies
  \[
  \Sat_{\STD}^d(\SEPone),
  \qquad
  \neg\Sat_{\COM}^d(\SEPone),
  \qquad
  \neg\Sat_{\PBA}^d(\SEPone).
  \]
  \item Consequently,
  \[
  \mathsf{SAT}_{\PBA}^d \subsetneq \mathsf{SAT}_{\STD}^d
  \qquad\text{and}\qquad
  \mathsf{SAT}_{\COM}^d \subsetneq \mathsf{SAT}_{\STD}^d
  \]
  for every $d\geq 2$, while the exact relation between $\mathsf{SAT}_{\COM}^d$ and $\mathsf{SAT}_{\PBA}^d$ is left open.
\end{enumerate}

\section{Syntax and the three semantics}

Fix a field $\field\in\{\mathbb{R},\mathbb{C}\}$ and an integer $d\geq 1$.
Write $H:=\field^d$ with its standard inner product.
Let $L(H)$ be the set of linear subspaces of $H$, and let $\Proj(H)$ be the set of orthogonal projectors on $H$.

\begin{definition}[Formulas]
Fix a countable set of atoms
\[
\Var=\{p_0,p_1,p_2,\dots\}.
\]
The set of formulas is generated by the grammar
\[
\varphi ::= p_i \mid \neg\varphi \mid (\varphi\wedge\psi) \mid (\varphi\vee\psi).
\]
No constants and no additional connectives are assumed.
\end{definition}

\begin{definition}[Standard Hilbert-lattice semantics]
A \emph{standard valuation} is a map $v:\Var\to L(H)$.
For such a valuation, define $\sem{\varphi}^{\STD}_v\in L(H)$ recursively by
\[
\sem{p_i}^{\STD}_v := v(p_i),
\qquad
\sem{\neg\varphi}^{\STD}_v := \bigl(\sem{\varphi}^{\STD}_v\bigr)^{\perp},
\]
\[
\sem{\varphi\wedge\psi}^{\STD}_v := \sem{\varphi}^{\STD}_v\cap \sem{\psi}^{\STD}_v,
\qquad
\sem{\varphi\vee\psi}^{\STD}_v := \sem{\varphi}^{\STD}_v + \sem{\psi}^{\STD}_v.
\]
We write $\Sat_{\STD}^d(\varphi)$ if there exists a standard valuation $v$ such that
\[
\sem{\varphi}^{\STD}_v \neq \{0\}.
\]
\end{definition}

\begin{remark}
Since $H$ is finite-dimensional, every linear subspace is automatically closed, and the sum of two subspaces is again a subspace.
Thus the join clause above is well-defined inside $L(H)$.
\end{remark}

\begin{definition}[Global commuting-projector semantics]
A \emph{projector valuation} is a map $v:\Var\to \Proj(H)$.
Given a formula $\varphi$, such a valuation is \emph{COM-admissible for $\varphi$} if for every pair of atoms $p_i,p_j$ occurring in $\varphi$ one has
\[
 v(p_i)v(p_j)=v(p_j)v(p_i).
\]
If $v$ is COM-admissible for $\varphi$, define $\sem{\theta}^{\COM}_v\in\Proj(H)$ for every subformula $\theta$ of $\varphi$ by
\[
\sem{p_i}^{\COM}_v := v(p_i),
\qquad
\sem{\neg\theta}^{\COM}_v := I-\sem{\theta}^{\COM}_v,
\]
\[
\sem{\alpha\wedge\beta}^{\COM}_v := \sem{\alpha}^{\COM}_v\,\sem{\beta}^{\COM}_v,
\qquad
\sem{\alpha\vee\beta}^{\COM}_v := \sem{\alpha}^{\COM}_v + \sem{\beta}^{\COM}_v - \sem{\alpha}^{\COM}_v\sem{\beta}^{\COM}_v.
\]
We write $\Sat_{\COM}^d(\varphi)$ if there exists a COM-admissible valuation $v$ for $\varphi$ such that
\[
\sem{\varphi}^{\COM}_v \neq 0.
\]
\end{definition}

\begin{definition}[Local partial-Boolean semantics]
For projectors $P,Q\in\Proj(H)$ write $P\# Q$ if $PQ=QP$.
A projector valuation is again a map $v:\Var\to\Proj(H)$.
For a formula $\varphi$ we define, recursively, what it means for $\sem{\varphi}^{\PBA}_v$ to be defined, and when defined we set its value as follows:
\begin{align*}
\sem{p_i}^{\PBA}_v &:= v(p_i),\\
\sem{\neg\theta}^{\PBA}_v &:= I-\sem{\theta}^{\PBA}_v \qquad \text{if } \sem{\theta}^{\PBA}_v \text{ is defined},\\
\sem{\alpha\wedge\beta}^{\PBA}_v &:= \sem{\alpha}^{\PBA}_v\,\sem{\beta}^{\PBA}_v \qquad \text{if } \sem{\alpha}^{\PBA}_v,\sem{\beta}^{\PBA}_v \text{ are defined and } \sem{\alpha}^{\PBA}_v \# \sem{\beta}^{\PBA}_v,\\
\sem{\alpha\vee\beta}^{\PBA}_v &:= \sem{\alpha}^{\PBA}_v + \sem{\beta}^{\PBA}_v - \sem{\alpha}^{\PBA}_v\sem{\beta}^{\PBA}_v \\
&\hspace{3.3cm}\text{if } \sem{\alpha}^{\PBA}_v,\sem{\beta}^{\PBA}_v \text{ are defined and } \sem{\alpha}^{\PBA}_v \# \sem{\beta}^{\PBA}_v.
\end{align*}
We write $\Sat_{\PBA}^d(\varphi)$ if there exists a projector valuation $v$ such that $\sem{\varphi}^{\PBA}_v$ is defined and
\[
\sem{\varphi}^{\PBA}_v\neq 0.
\]
\end{definition}

\begin{remark}
The distinction between \(\COM\) and \(\PBA\) is structural.
In \(\COM\), commutativity is required globally at the atom level for the whole formula.
In \(\PBA\), commutativity is checked only where a binary connective is actually evaluated.
The standard semantics imposes no commutativity restriction at all.
\end{remark}

\section{The easy implications}

We first record the range identities needed to pass from projector semantics to subspace semantics.

\begin{lemma}[Range identities for commuting projections]\label{lem:range-identities}
Let $P,Q\in\Proj(H)$ satisfy $PQ=QP$.
Then
\[
\Ran(PQ)=\Ran(P)\cap \Ran(Q),
\qquad
\Ran(I-P)=\Ran(P)^{\perp},
\]
and
\[
\Ran(P+Q-PQ)=\Ran(P)+\Ran(Q).
\]
\end{lemma}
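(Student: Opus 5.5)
We first record that $PQ$ is again an orthogonal projector when $PQ=QP$. Indeed, $(PQ)^2=PQPQ=P^2Q^2=PQ$ and $(PQ)^*=Q^*P^*=QP=PQ$. Consequently $P+Q-PQ=I-(I-P)(I-Q)$ is a projector too, because $I-P$ and $I-Q$ commute. Every identity will be proved by double inclusion of ranges, using the fact that a projector $R$ satisfies $x\in\Ran(R)$ if and only if $Rx=x$.

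\emph{The complement identity.} This needs no commutativity. If $x=(I-P)y$, then for every $z$ we have $\langle x,Pz\rangle=\langle P(I-P)y,z\rangle=0$, so $x\in\Ran(P)^\perp$. Conversely, if $x\perp\Ran(P)$, then $\|Px\|^2=\langle x,P(Px)\rangle=0$ since $P(Px)\in\Ran(P)$. Hence $Px=0$ and $x=(I-P)x\in\Ran(I-P)$.

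\emph{The meet identity.} Let $x=PQy$. Then $x\in\Ran(P)$ directly, and $x=QPy\in\Ran(Q)$ by commutativity. Conversely, if $Px=x$ and $Qx=x$, then $PQx=Px=x$, so $x\in\Ran(PQ)$.

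\emph{The join identity.} Write $R:=P+Q-PQ$. The inclusion $\Ran(R)\subseteq\Ran(P)+\Ran(Q)$ follows from $Ry=P\bigl(y-Qy\bigr)+Qy$. For the reverse inclusion, suppose $Px=x$. Then $Rx=x+Qx-PQx=x+Qx-QPx=x+Qx-Qx=x$. Symmetrically, $Qx=x$ gives $Rx=x+Px-Px=x$. Since $\Ran(R)$ is a linear subspace containing both $\Ran(P)$ and $\Ran(Q)$, it contains their sum. As an alternative check, combine the first two identities via De Morgan: $\Ran(R)=\Ran\bigl((I-P)(I-Q)\bigr)^\perp=\bigl(\Ran(P)^\perp\cap\Ran(Q)^\perp\bigr)^\perp=\Ran(P)+\Ran(Q)$, where the last step uses that $H$ is finite-dimensional, so $(U^\perp\cap V^\perp)^\perp=U+V$ with no closure needed.

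The only step needing care is the join identity. Commutativity is used exactly in the cancellation $PQx=QPx=Qx$ for $x\in\Ran(P)$. Without commutativity, $P+Q-PQ$ need not even be a projector.
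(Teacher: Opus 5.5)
Your proof is correct. The meet and complement identities follow essentially the paper's argument; the paper just calls the complement identity immediate. For the join identity your route differs from the paper's.

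The paper writes $J=Q+P(I-Q)$ and checks that $P(I-Q)$ is a projector orthogonal to $Q$. It concludes that $J$ is the orthogonal projector onto $\Ran(Q)\oplus(\Ran(P)\cap\ker Q)$, and then equates this with $\Ran(P)+\Ran(Q)$. You use the same splitting only for the easy inclusion $\Ran(R)\subseteq\Ran(P)+\Ran(Q)$. You get the reverse inclusion by showing that $R$ fixes $\Ran(P)$ and $\Ran(Q)$ pointwise.

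Your version needs neither idempotence of $R$ nor the fact that a sum of mutually orthogonal projectors projects onto the direct sum of their ranges. It also shows exactly where commutativity enters. In doing so it covers a step the paper leaves implicit: the final equality $\Ran(Q)\oplus(\Ran(P)\cap\ker Q)=\Ran(P)+\Ran(Q)$ genuinely depends on commutativity. It fails, for instance, for the lines $\mathrm{span}(e_1+e_2)$ and $\mathrm{span}(e_1)$ used in the separator. Its $\supseteq$ direction needs $(I-Q)x=P(I-Q)x$ for $x\in\Ran(P)$, which your cancellation $PQx=QPx=Qx$ handles directly.

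What the paper's route gives in exchange is structure. It yields an explicit orthogonal decomposition of the join and the projector property of $J$ in one step. You obtain the projector property separately, and more cleanly, via $J=I-(I-P)(I-Q)$.

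Your De Morgan cross-check is a valid third route. It reduces the join identity to the meet and complement identities plus $W^{\perp\perp}=W$ in finite dimension.
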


\begin{proof}
For the intersection identity, let $x\in\Ran(PQ)$.
Then $x=PQy$ for some $y\in H$.
Since $P$ and $Q$ commute,
\[
Px=PPQy=PQy=x,
\qquad
Qx=QPQy=PQy=x.
\]
Thus $x\in\Ran(P)\cap\Ran(Q)$.
Conversely, if $x\in\Ran(P)\cap\Ran(Q)$, then $Px=x$ and $Qx=x$, hence
\[
PQx=x,
\]
so $x\in\Ran(PQ)$.
Therefore $\Ran(PQ)=\Ran(P)\cap\Ran(Q)$.

The complement identity is immediate: $I-P$ is the orthogonal projector onto $\Ran(P)^{\perp}$.

For the join identity, set
\[
J:=P+Q-PQ = Q + P(I-Q).
\]
Because $P$ commutes with $Q$, the operator $P(I-Q)$ is an orthogonal projector.
Moreover,
\[
Q\,P(I-Q)=PQ-PQ^2=PQ-PQ=0,
\]
and similarly $P(I-Q)Q=0$.
Hence $Q$ and $P(I-Q)$ are orthogonal projections, so $J$ is the orthogonal projector onto
\[
\Ran(Q)\oplus \Ran(P(I-Q)).
\]
It remains to identify $\Ran(P(I-Q))$.
If $x=P(I-Q)y$, then $Px=x$ and
\[
Qx=QP(I-Q)y=PQ(I-Q)y=0,
\]
so $x\in\Ran(P)\cap\ker(Q)$.
Conversely, if $x\in\Ran(P)\cap\ker(Q)$, then $Px=x$ and $Qx=0$, hence
\[
P(I-Q)x=Px-PQx=x.
\]
Thus
\[
\Ran(P(I-Q))=\Ran(P)\cap\ker(Q).
\]
Therefore
\[
\Ran(J)=\Ran(Q)\oplus\bigl(\Ran(P)\cap\ker(Q)\bigr)=\Ran(P)+\Ran(Q),
\]
as required.
\end{proof}

\begin{lemma}[COM evaluations are also PBA evaluations]\label{lem:com-to-pba-eval}
Let $\varphi$ be a formula and let $v$ be COM-admissible for $\varphi$.
Then for every subformula $\theta$ of $\varphi$:
\begin{enumerate}[label=(\roman*),leftmargin=2em]
  \item $\sem{\theta}^{\COM}_v$ is an orthogonal projector;
  \item $\sem{\theta}^{\COM}_v$ belongs to the commutative algebra generated by the atom values $v(p_i)$ occurring in $\varphi$;
  \item $\sem{\theta}^{\PBA}_v$ is defined and
  \[
  \sem{\theta}^{\PBA}_v = \sem{\theta}^{\COM}_v.
  \]
\end{enumerate}
\end{lemma}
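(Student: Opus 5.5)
We argue by structural induction on the subformula $\theta$ of $\varphi$, proving (i), (ii) and (iii) together. Let $\mathcal{A}$ be the (unital) algebra generated by $I$ and the finitely many atom values $v(p_i)$ with $p_i$ occurring in $\varphi$. By COM-admissibility these generators pairwise commute, so $\mathcal{A}$ is commutative: it is spanned by finite products of generators, and any two such products commute. Strengthening (ii) to membership in $\mathcal{A}$, which includes $I$, is harmless and makes the negation step clean.

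For the base case $\theta=p_i$, the value $v(p_i)$ is an orthogonal projector by definition of a projector valuation, lies in $\mathcal{A}$, and $\sem{p_i}^{\PBA}_v=v(p_i)=\sem{p_i}^{\COM}_v$ is defined. For $\theta=\neg\alpha$, the induction hypothesis gives that $P:=\sem{\alpha}^{\COM}_v$ is a projector in $\mathcal{A}$ equal to the defined value $\sem{\alpha}^{\PBA}_v$. Then $I-P$ is self-adjoint and idempotent, lies in $\mathcal{A}$, and the PBA clause for negation applies unconditionally, giving the same value.

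For $\theta=\alpha\wedge\beta$ or $\alpha\vee\beta$, set $P:=\sem{\alpha}^{\COM}_v$ and $Q:=\sem{\beta}^{\COM}_v$. By induction both are projectors in $\mathcal{A}$ and coincide with the defined PBA values. The key step is that since $\mathcal{A}$ is commutative, $PQ=QP$, i.e.\ $P\#Q$. This is exactly the side condition in the PBA clauses, so $\sem{\theta}^{\PBA}_v$ is defined and, the defining expressions being identical, equals $\sem{\theta}^{\COM}_v$. It remains to check (i). For $PQ$: $(PQ)^*=QP=PQ$ and $(PQ)^2=P^2Q^2=PQ$. For $P+Q-PQ$: self-adjointness follows from that of $PQ$, and idempotence follows either by expanding using commutativity or by rewriting it as $Q+P(I-Q)$, a sum of mutually orthogonal projectors, as in the proof of Lemma~\ref{lem:range-identities}. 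Both expressions are polynomials in $P,Q$ and so remain in $\mathcal{A}$, which gives (ii).

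The only point needing care is the commutativity of $\mathcal{A}$ itself, since everything else is routine algebra. Its role is to propagate atom-level commutativity to arbitrary compound subformulas. Without the strengthened invariant (ii), the induction would break: two compound subformulas need not commute merely because each is a projector. We secure it once at the start by observing that products of pairwise commuting generators commute, extended bilinearly to their spans.
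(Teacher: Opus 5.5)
Your proof is correct and follows the paper's own argument: a simultaneous structural induction in which membership of every subformula value in one commutative algebra generated by the atom values makes sibling values commute, so the PBA side conditions hold and the two recursions coincide. Making the algebra explicitly unital is the right reading, since the paper's negation step tacitly needs $I$ in it; note, though, that enlarging the algebra weakens (ii) rather than strengthens it.
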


\begin{proof}
We argue by induction on the structure of $\theta$.
For an atom $p_i$ there is nothing to prove.

Assume the claim holds for $\theta$.
Then $\sem{\theta}^{\COM}_v$ is a projector in the commutative algebra generated by the atom values, so
\[
I-\sem{\theta}^{\COM}_v
\]
is again a projector in the same algebra.
This proves the claim for $\neg\theta$.

Now let $\theta=(\alpha\wedge\beta)$ or $\theta=(\alpha\vee\beta)$, and assume the claim for $\alpha$ and $\beta$.
By induction, $\sem{\alpha}^{\COM}_v$ and $\sem{\beta}^{\COM}_v$ are projectors in the same commutative algebra, hence they commute.
Therefore their product is a projector, and so is
\[
\sem{\alpha}^{\COM}_v + \sem{\beta}^{\COM}_v - \sem{\alpha}^{\COM}_v\sem{\beta}^{\COM}_v.
\]
Both operators belong to the same commutative algebra.
This proves (i) and (ii) for the binary cases.

Because the child values commute, the corresponding PBA node is defined.
The PBA and COM recursion clauses are literally the same on commuting projectors, so the values coincide.
This proves (iii).
\end{proof}

\begin{theorem}[\(\COM\Rightarrow\PBA\)]\label{thm:com-implies-pba}
For every $d\geq 1$ and every formula $\varphi$,
\[
\Sat_{\COM}^d(\varphi) \Longrightarrow \Sat_{\PBA}^d(\varphi).
\]
\end{theorem}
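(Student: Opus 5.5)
The plan is to take the witnessing valuation from the hypothesis and reuse it unchanged, with Lemma~\ref{lem:com-to-pba-eval} doing all the work. Suppose $\Sat_{\COM}^d(\varphi)$ holds. Then there is a projector valuation $v:\Var\to\Proj(H)$ that is COM-admissible for $\varphi$ and satisfies $\sem{\varphi}^{\COM}_v\neq 0$. The same map $v$ is also a projector valuation in the sense of the local partial-Boolean semantics, since that semantics places no global admissibility condition on valuations.

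Next we apply Lemma~\ref{lem:com-to-pba-eval}(iii) with $\theta:=\varphi$; note that $\varphi$ is a subformula of itself. This gives two facts at once: $\sem{\varphi}^{\PBA}_v$ is defined, and $\sem{\varphi}^{\PBA}_v=\sem{\varphi}^{\COM}_v$. Consequently $\sem{\varphi}^{\PBA}_v\neq 0$, which is exactly the witness that $\Sat_{\PBA}^d(\varphi)$ requires.

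There is no genuine obstacle, because the substantive induction is already packaged in Lemma~\ref{lem:com-to-pba-eval}. The only points to check are bookkeeping ones. First, PBA definedness is defined recursively along the parse tree, and the lemma's clause (iii) is stated for every subformula, so it covers every node of $\varphi$, including the root. Second, atoms occurring outside $\varphi$ play no role in either semantics, so it does not matter how $v$ treats them.
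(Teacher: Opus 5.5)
Your proposal is correct and matches the paper's proof: both reuse the COM-admissible witness $v$ unchanged. Both then apply Lemma~\ref{lem:com-to-pba-eval}(iii) at $\theta=\varphi$, which gives that $\sem{\varphi}^{\PBA}_v$ is defined and equals $\sem{\varphi}^{\COM}_v\neq 0$.
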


\begin{proof}
Assume $\Sat_{\COM}^d(\varphi)$.
Then there exists a COM-admissible valuation $v$ such that $\sem{\varphi}^{\COM}_v\neq 0$.
By Lemma~\ref{lem:com-to-pba-eval}, the PBA evaluation is defined and satisfies
\[
\sem{\varphi}^{\PBA}_v = \sem{\varphi}^{\COM}_v \neq 0.
\]
Hence $\Sat_{\PBA}^d(\varphi)$.
\end{proof}

\begin{theorem}[\(\PBA\Rightarrow\STD\)]\label{thm:pba-implies-std}
For every $d\geq 1$ and every formula $\varphi$,
\[
\Sat_{\PBA}^d(\varphi) \Longrightarrow \Sat_{\STD}^d(\varphi).
\]
\end{theorem}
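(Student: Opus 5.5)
Given a PBA witness $v$ for $\varphi$, the natural choice is the standard valuation $w(p_i):=\Ran(v(p_i))$, and then to compare the two evaluations along the parse tree. Concretely, I would prove by structural induction on subformulas $\theta$ of $\varphi$ that, whenever $\sem{\theta}^{\PBA}_v$ is defined, it is an orthogonal projector and
\[
\Ran\bigl(\sem{\theta}^{\PBA}_v\bigr)=\sem{\theta}^{\STD}_w .
\]
Since definedness of $\sem{\varphi}^{\PBA}_v$ forces definedness at every node of its parse tree, the hypothesis of the induction is available at each step. The conclusion then follows: $\sem{\varphi}^{\PBA}_v\neq 0$ means its range is nonzero, so $\sem{\varphi}^{\STD}_w\neq\{0\}$.

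The steps are as follows. In the atom case, $v(p_i)$ is a projector and $\Ran(v(p_i))=w(p_i)$ holds by definition. In the negation case, $I-P$ is a projector whenever $P$ is, and Lemma~\ref{lem:range-identities} gives $\Ran(I-P)=\Ran(P)^\perp$; by the induction hypothesis this is $(\sem{\theta}^{\STD}_w)^\perp$. In the binary cases, definedness supplies $P\#Q$ for the child values $P,Q$, which are projectors by induction. Then $PQ$ is a projector, since $(PQ)^2=P^2Q^2=PQ$ and $(PQ)^*=QP=PQ$. The operator $P+Q-PQ$ is also a projector, either by the argument already given inside the proof of Lemma~\ref{lem:range-identities} or by direct expansion. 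The range identities of Lemma~\ref{lem:range-identities} then give $\Ran(PQ)=\Ran P\cap\Ran Q$ and $\Ran(P+Q-PQ)=\Ran P+\Ran Q$, which match the STD clauses exactly after substituting the induction hypothesis.

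There is no real obstacle. The only point needing care is that the induction must carry the invariant "the value is an orthogonal projector", because $\#$ only constrains commutativity; this invariant is what licenses applying Lemma~\ref{lem:range-identities} at each node. Local commutativity at each evaluated node suffices, and no global commutation of the atoms is needed.
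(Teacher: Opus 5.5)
Your proposal is correct and matches the paper's proof: it uses the same valuation $u(p_i)=\Ran(v(p_i))$ and the same nodewise induction $\sem{\theta}^{\STD}_u=\Ran\bigl(\sem{\theta}^{\PBA}_v\bigr)$ via Lemma~\ref{lem:range-identities}. Your explicit invariant that every defined PBA value is an orthogonal projector is a useful addition, because the paper uses this fact implicitly when it applies that lemma (stated for elements of $\Proj(H)$) without proving it for PBA values.
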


\begin{proof}
Assume $\Sat_{\PBA}^d(\varphi)$.
Then there exists a projector valuation $v$ such that $\sem{\varphi}^{\PBA}_v$ is defined and nonzero.
Define a standard valuation $u:\Var\to L(H)$ by
\[
 u(p_i):=\Ran(v(p_i)).
\]
We claim that for every subformula $\theta$ of $\varphi$,
\[
\sem{\theta}^{\STD}_u = \Ran\bigl(\sem{\theta}^{\PBA}_v\bigr).
\]
We prove this by induction on $\theta$.

For an atom $p_i$, this is the definition of $u$.
For negation, use the induction hypothesis and the identity
\[
\Ran(I-P)=\Ran(P)^{\perp}
\]
from Lemma~\ref{lem:range-identities}.

Let $\theta=(\alpha\wedge\beta)$.
Because the PBA value of $\theta$ is defined, the values
$\sem{\alpha}^{\PBA}_v$ and $\sem{\beta}^{\PBA}_v$ commute.
Hence Lemma~\ref{lem:range-identities} yields
\[
\Ran\bigl(\sem{\theta}^{\PBA}_v\bigr)
=
\Ran\bigl(\sem{\alpha}^{\PBA}_v\sem{\beta}^{\PBA}_v\bigr)
=
\Ran\bigl(\sem{\alpha}^{\PBA}_v\bigr) \cap \Ran\bigl(\sem{\beta}^{\PBA}_v\bigr).
\]
By the induction hypothesis this equals
\[
\sem{\alpha}^{\STD}_u \cap \sem{\beta}^{\STD}_u = \sem{\theta}^{\STD}_u.
\]

Let $\theta=(\alpha\vee\beta)$.
Again the PBA value is defined only on commuting child values, so by Lemma~\ref{lem:range-identities},
\[
\Ran\bigl(\sem{\theta}^{\PBA}_v\bigr)
=
\Ran\Bigl(\sem{\alpha}^{\PBA}_v + \sem{\beta}^{\PBA}_v - \sem{\alpha}^{\PBA}_v\sem{\beta}^{\PBA}_v\Bigr)
=
\Ran\bigl(\sem{\alpha}^{\PBA}_v\bigr) + \Ran\bigl(\sem{\beta}^{\PBA}_v\bigr).
\]
The induction hypothesis gives
\[
\Ran\bigl(\sem{\theta}^{\PBA}_v\bigr)
=
\sem{\alpha}^{\STD}_u + \sem{\beta}^{\STD}_u
=
\sem{\theta}^{\STD}_u.
\]
This proves the claim.

Applying the claim to $\theta=\varphi$ yields
\[
\sem{\varphi}^{\STD}_u = \Ran\bigl(\sem{\varphi}^{\PBA}_v\bigr).
\]
Since $\sem{\varphi}^{\PBA}_v\neq 0$, its range is nonzero, so
\[
\sem{\varphi}^{\STD}_u \neq \{0\}.
\]
Therefore $\Sat_{\STD}^d(\varphi)$.
\end{proof}

\begin{corollary}[Implication chain]\label{cor:implication-chain}
For every $d\geq 1$ and every formula $\varphi$,
\[
\Sat_{\COM}^d(\varphi) \Longrightarrow \Sat_{\PBA}^d(\varphi) \Longrightarrow \Sat_{\STD}^d(\varphi).
\]
In particular,
\[
\Sat_{\COM}^d(\varphi) \Longrightarrow \Sat_{\STD}^d(\varphi).
\]
\end{corollary}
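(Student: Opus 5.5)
The corollary follows by chaining Theorem~\ref{thm:com-implies-pba} and Theorem~\ref{thm:pba-implies-std}; no new argument is needed. Fix $d\geq 1$ and a formula $\varphi$, and assume $\Sat_{\COM}^d(\varphi)$.

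First, Theorem~\ref{thm:com-implies-pba} gives $\Sat_{\PBA}^d(\varphi)$. Concretely, the same COM-admissible valuation $v$ witnesses this: by Lemma~\ref{lem:com-to-pba-eval}(iii), $\sem{\varphi}^{\PBA}_v$ is defined and equals $\sem{\varphi}^{\COM}_v\neq 0$.

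Second, Theorem~\ref{thm:pba-implies-std} applied to this $v$ gives $\Sat_{\STD}^d(\varphi)$. The witness is the standard valuation $u(p_i):=\Ran(v(p_i))$, for which $\sem{\varphi}^{\STD}_u=\Ran(\sem{\varphi}^{\PBA}_v)\neq\{0\}$.

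This establishes the displayed chain. The "in particular" clause is the transitivity of implication, so its witness is again $u$. Composing the two constructions, the direct COM-to-STD statement reads
\[
\sem{\varphi}^{\STD}_u=\Ran\bigl(\sem{\varphi}^{\COM}_v\bigr).
\]

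Neither step is an obstacle, since both implications are already proved. The only care needed is to note that the two theorems use the same fixed $d$ and the same formula $\varphi$, so they compose directly without re-indexing atoms or changing the ambient space $H=\field^d$.
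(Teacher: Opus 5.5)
Your proposal is correct and matches the paper's proof, which simply combines Theorem~\ref{thm:com-implies-pba} and Theorem~\ref{thm:pba-implies-std}. Your explicit tracking of the witnesses $v$ and $u(p_i)=\Ran(v(p_i))$ accurately unpacks those two proofs.
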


\begin{proof}
Combine Theorems~\ref{thm:com-implies-pba} and~\ref{thm:pba-implies-std}.
\end{proof}

\section{The separator \texorpdfstring{$\SEPone$}{SEP-1}}

\begin{definition}[The separator formula]
Define
\[
\SEPone := (p\wedge(q\vee r))\wedge \neg\bigl((p\wedge q)\vee(p\wedge r)\bigr).
\]
\end{definition}

The point of $\SEPone$ is simple.
In the standard Hilbert-lattice semantics, the failure of distributivity can make
\[
 p\wedge(q\vee r)
\quad\text{strictly larger than}\quad
 (p\wedge q)\vee(p\wedge r).
\]
In contrast, once the valuation lives inside one commuting Boolean block, the two sides coincide.

\begin{theorem}[\(\SEPone\) is COM-unsatisfiable]\label{thm:sep1-com-unsat}
For every $d\geq 1$,
\[
\neg\Sat_{\COM}^d(\SEPone).
\]
\end{theorem}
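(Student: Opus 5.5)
Let $v$ be COM-admissible for $\SEPone$, write $P:=v(p)$, $Q:=v(q)$, $R:=v(r)$, and compute $\sem{\SEPone}^{\COM}_v$ directly in the commutative algebra generated by $P,Q,R$. Since the three atoms pairwise commute, Lemma~\ref{lem:com-to-pba-eval}(i),(ii) shows that every subformula value is a projector in this commutative algebra. The recursion clauses are therefore ordinary polynomial identities in commuting idempotents, and we may manipulate them freely.

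The key step is distributivity at the operator level. We have
\[
\sem{p\wedge(q\vee r)}^{\COM}_v = P(Q+R-QR) = PQ+PR-PQR.
\]
On the other side,
\[
\sem{(p\wedge q)\vee(p\wedge r)}^{\COM}_v = PQ+PR-(PQ)(PR) = PQ+PR-P^2QR = PQ+PR-PQR,
\]
where we use commutativity to reorder and $P^2=P$. The two expressions agree; call the common value $A$. It is a projector by Lemma~\ref{lem:com-to-pba-eval}(i).

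It follows that
\[
\sem{\SEPone}^{\COM}_v = A(I-A) = A-A^2 = 0.
\]
Hence no COM-admissible valuation makes $\SEPone$ nonzero, and $\neg\Sat_{\COM}^d(\SEPone)$ holds for every $d\geq 1$.

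There is no real obstacle here. The only point requiring care is that the products $(PQ)(PR)$ and $A(I-A)$ may be reordered. This is justified because all values lie in the commutative algebra given by Lemma~\ref{lem:com-to-pba-eval}(ii), and the argument is otherwise independent of the dimension $d$.
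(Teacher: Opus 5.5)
Your proof is correct and follows the paper's argument essentially line for line: both expand $P(Q+R-QR)$ and $PQ+PR-(PQ)(PR)$, use commutativity and $P^2=P$ to identify them as one projector $X$, and conclude $X(I-X)=0$. Your explicit appeal to Lemma~\ref{lem:com-to-pba-eval}(i) for the idempotence of the common value spells out a step the paper leaves implicit, but the proof is the same.
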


\begin{proof}
Let $v$ be COM-admissible for $\SEPone$, and set
\[
A:=v(p),\qquad B:=v(q),\qquad C:=v(r).
\]
Then $A,B,C$ are pairwise commuting orthogonal projectors.
Compute the two inner terms of $\SEPone$ under the COM semantics.
First,
\[
\sem{p\wedge(q\vee r)}^{\COM}_v
=
A(B+C-BC)
=
AB+AC-ABC.
\]
Second,
\[
\sem{(p\wedge q)\vee(p\wedge r)}^{\COM}_v
=
AB+AC-(AB)(AC).
\]
Since $A,B,C$ commute and $A^2=A$,
\[
(AB)(AC)=A^2BC=ABC.
\]
Hence
\[
\sem{p\wedge(q\vee r)}^{\COM}_v
=
\sem{(p\wedge q)\vee(p\wedge r)}^{\COM}_v.
\]
Denote this common projector by $X$.
Then
\[
\sem{\SEPone}^{\COM}_v = X(I-X)=0.
\]
Thus no COM-admissible valuation can make $\SEPone$ nonzero, so
\[
\neg\Sat_{\COM}^d(\SEPone).
\]
\end{proof}

\begin{lemma}[Definedness of \texorpdfstring{$\SEPone$}{SEP-1} in \(\PBA\) forces pairwise commutativity]\label{lem:sep1-definedness}
Let $v$ be a projector valuation such that $\sem{\SEPone}^{\PBA}_v$ is defined.
Then the three atom values
\[
A:=v(p),\qquad B:=v(q),\qquad C:=v(r)
\]
commute pairwise.
\end{lemma}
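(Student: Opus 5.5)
The plan is to read off the three commutation relations directly from the definedness clauses of the $\PBA$ recursion. Each of the three binary subformulas $q\vee r$, $p\wedge q$ and $p\wedge r$ has two atoms as its immediate children, and each of them occurs inside $\SEPone$. So definedness at those three nodes forces exactly the pairs $(B,C)$, $(A,B)$ and $(A,C)$ to commute.

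\textbf{Step 1: definedness propagates downward.} I first record that if $\sem{\varphi}^{\PBA}_v$ is defined, then $\sem{\theta}^{\PBA}_v$ is defined for every subformula $\theta$ of $\varphi$. This follows by a routine induction on $\varphi$, because every clause of the $\PBA$ recursion (negation, meet, join) presupposes that the values of the immediate children are defined. Applying this to $\varphi=\SEPone$, whose parse tree contains
\[
q\vee r,\qquad p\wedge q,\qquad p\wedge r,
\]
shows that the $\PBA$ values of these three subformulas are defined.

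\textbf{Step 2: read off the commutation relations.} By the definition of the local partial-Boolean semantics, a binary node $\alpha\wedge\beta$ or $\alpha\vee\beta$ is defined only if $\sem{\alpha}^{\PBA}_v\#\sem{\beta}^{\PBA}_v$. Each of the three subformulas above has atoms as its children, whose values are $v(p)=A$, $v(q)=B$ and $v(r)=C$. Hence:
\begin{itemize}
  \item definedness of $q\vee r$ gives $BC=CB$;
  \item definedness of $p\wedge q$ gives $AB=BA$;
  \item definedness of $p\wedge r$ gives $AC=CA$.
\end{itemize}
These are exactly the three pairwise commutation relations claimed.

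There is no real obstacle here. The only point needing care is the downward propagation of definedness in Step 1, which is immediate from the way the recursion is phrased. It is also worth noting why this lemma will matter later. By Step 2 together with Lemma~\ref{lem:com-to-pba-eval}, such a $v$ is COM-admissible for $\SEPone$, and its $\PBA$ value coincides with its $\COM$ value. Theorem~\ref{thm:sep1-com-unsat} then forces that value to be $0$, which is presumably how $\PBA$-unsatisfiability of $\SEPone$ will be derived.
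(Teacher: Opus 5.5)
Your proof is correct and is the same as the paper's: definedness of the subformulas $q\vee r$, $p\wedge q$ and $p\wedge r$ forces $B\#C$, $A\#B$ and $A\#C$. The only addition is that you state explicitly the downward propagation of definedness to subformulas, which the paper uses without comment.
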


\begin{proof}
Because $q\vee r$ is a subformula of $\SEPone$, its PBA value is defined.
Hence $B\# C$.
Because $p\wedge q$ and $p\wedge r$ are subformulas of $\SEPone$, their PBA values are also defined.
Hence $A\# B$ and $A\# C$.
Therefore $A,B,C$ commute pairwise.
\end{proof}

\begin{theorem}[\(\SEPone\) is PBA-unsatisfiable]\label{thm:sep1-pba-unsat}
For every $d\geq 1$,
\[
\neg\Sat_{\PBA}^d(\SEPone).
\]
\end{theorem}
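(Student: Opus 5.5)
Suppose, for contradiction, that some projector valuation $v$ has $\sem{\SEPone}^{\PBA}_v$ defined and nonzero. Our plan is to reduce this case to the COM case already treated in Theorem~\ref{thm:sep1-com-unsat}.

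First, Lemma~\ref{lem:sep1-definedness} gives that $A:=v(p)$, $B:=v(q)$, $C:=v(r)$ commute pairwise. The atoms occurring in $\SEPone$ are exactly $p,q,r$, so $v$ is COM-admissible for $\SEPone$.

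Next, Lemma~\ref{lem:com-to-pba-eval}(iii) applies to every subformula, in particular to $\SEPone$ itself. It yields $\sem{\SEPone}^{\PBA}_v=\sem{\SEPone}^{\COM}_v$. This needs no separate check that the PBA value is defined, since the lemma guarantees definedness.

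Finally, the computation in the proof of Theorem~\ref{thm:sep1-com-unsat} shows $\sem{\SEPone}^{\COM}_v=X(I-X)=0$, where $X$ is the common value of $p\wedge(q\vee r)$ and $(p\wedge q)\vee(p\wedge r)$. Hence $\sem{\SEPone}^{\PBA}_v=0$, contradicting the assumption. This proves $\neg\Sat_{\PBA}^d(\SEPone)$ for every $d\geq 1$.

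There is no real obstacle. The only point needing care is the definedness hypothesis: Lemma~\ref{lem:sep1-definedness} needs the PBA value to be defined in order to extract commutativity. This holds by the definition of $\Sat_{\PBA}^d$. Once atom-level commutativity is in hand, the COM-unsatisfiability argument carries over verbatim.
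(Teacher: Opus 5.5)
Your proposal is correct and follows the paper's proof essentially step for step: definedness forces pairwise commutativity via Lemma~\ref{lem:sep1-definedness}, so $v$ is COM-admissible, and Lemma~\ref{lem:com-to-pba-eval} then identifies the PBA and COM values, which the computation behind Theorem~\ref{thm:sep1-com-unsat} shows is $0$. The only difference is cosmetic: you argue by contradiction, while the paper concludes directly that every defined PBA evaluation of $\SEPone$ vanishes.
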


\begin{proof}
Suppose that $\sem{\SEPone}^{\PBA}_v$ is defined for some projector valuation $v$.
By Lemma~\ref{lem:sep1-definedness}, the atom values $v(p),v(q),v(r)$ commute pairwise, so $v$ is COM-admissible for $\SEPone$.
By Lemma~\ref{lem:com-to-pba-eval}, the PBA and COM values agree on $\SEPone$:
\[
\sem{\SEPone}^{\PBA}_v = \sem{\SEPone}^{\COM}_v.
\]
Theorem~\ref{thm:sep1-com-unsat} shows that the right-hand side is always $0$.
Hence every defined PBA evaluation of $\SEPone$ has value $0$.
Therefore
\[
\neg\Sat_{\PBA}^d(\SEPone).
\]
\end{proof}

\begin{theorem}[\(\SEPone\) is STD-satisfiable]\label{thm:sep1-std-sat}
For every $d\geq 2$,
\[
\Sat_{\STD}^d(\SEPone).
\]
\end{theorem}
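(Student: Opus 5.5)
The plan is to exhibit an explicit standard valuation built from three distinct lines in a two-dimensional coordinate plane, and then embed this into $\field^d$ for arbitrary $d\geq 2$. Let $e_1,\dots,e_d$ be the standard basis of $H=\field^d$, and set
\[
u(p):=\operatorname{span}\{e_1+e_2\},\qquad u(q):=\operatorname{span}\{e_1\},\qquad u(r):=\operatorname{span}\{e_2\},
\]
with all other atoms mapped arbitrarily (say to $\{0\}$). These are three pairwise distinct one-dimensional subspaces of the plane $W:=\operatorname{span}\{e_1,e_2\}$, which exists because $d\geq 2$. This is the classical witness to the failure of distributivity in the subspace lattice.

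The computation proceeds bottom-up through the parse tree of $\SEPone$.

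\begin{enumerate}[label=(\arabic*),leftmargin=2.1em]
  \item Since $e_1$ and $e_2$ are linearly independent, $\sem{q\vee r}^{\STD}_u=W$. As $e_1+e_2\in W$, it follows that $\sem{p\wedge(q\vee r)}^{\STD}_u=\operatorname{span}\{e_1+e_2\}$.
  \item The lines $u(p)$ and $u(q)$ are distinct, so their intersection is $\{0\}$. The same holds for $u(p)$ and $u(r)$. Hence $\sem{(p\wedge q)\vee(p\wedge r)}^{\STD}_u=\{0\}+\{0\}=\{0\}$.
  \item The negation of $\{0\}$ is $\{0\}^\perp=H$.
  \item Therefore $\sem{\SEPone}^{\STD}_u=\operatorname{span}\{e_1+e_2\}\cap H=\operatorname{span}\{e_1+e_2\}\neq\{0\}$, which witnesses $\Sat_{\STD}^d(\SEPone)$.
\end{enumerate}

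There is no real obstacle here. The only points requiring care are checking that two distinct lines meet only in $\{0\}$, and noting that the argument is uniform in $d\geq 2$ and in $\field\in\{\mathbb{R},\mathbb{C}\}$, since only vectors with real coefficients are used. One may also remark that $d\geq 2$ is necessary: for $d=1$ every subspace is $\{0\}$ or $H$, these projections commute, and the argument of Theorem~\ref{thm:sep1-com-unsat} applies.
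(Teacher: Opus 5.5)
Your proposal is correct and matches the paper's proof: it uses the same witness lines $\operatorname{span}(e_1+e_2)$, $\operatorname{span}(e_1)$, $\operatorname{span}(e_2)$ and the same bottom-up computation, with ``distinct lines meet only in $\{0\}$'' standing in for the paper's explicit coordinate comparison. Your side remark on $d=1$ is also right, though to pass from the COM computation to the STD value you need the range identities of Lemma~\ref{lem:range-identities}, applied as in the proof of Theorem~\ref{thm:pba-implies-std}.
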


\begin{proof}
Let $e_1,e_2,\dots,e_d$ be the standard basis of $H$.
Define subspaces
\[
P:=\operatorname{span}(e_1+e_2),
\qquad
Q:=\operatorname{span}(e_1),
\qquad
R:=\operatorname{span}(e_2).
\]
Let $v$ be the standard valuation satisfying
\[
 v(p)=P,
 \qquad
 v(q)=Q,
 \qquad
 v(r)=R,
\]
and sending all other atoms to $\{0\}$.
We compute the value of $\SEPone$ under $v$.

First,
\[
Q+R = \operatorname{span}(e_1,e_2),
\]
so $P\subseteq Q+R$ and therefore
\[
\sem{p\wedge(q\vee r)}^{\STD}_v = P\cap(Q+R)=P.
\]

Next we show that
\[
P\cap Q = \{0\},
\qquad
P\cap R = \{0\}.
\]
Let $x\in P\cap Q$.
Then for some scalars $a,b\in\field$,
\[
 x=a(e_1+e_2)=be_1.
\]
Comparing coordinates gives $a=0$ from the $e_2$-coordinate and then $b=0$ from the $e_1$-coordinate.
Thus $x=0$, so $P\cap Q=\{0\}$.
The proof of $P\cap R=\{0\}$ is identical: if
\[
 x=a(e_1+e_2)=ce_2,
\]
then comparison of coordinates gives $a=c=0$.

Hence
\[
\sem{(p\wedge q)\vee(p\wedge r)}^{\STD}_v
=
(P\cap Q)+(P\cap R)
=
\{0\}.
\]
Taking orthocomplement yields
\[
\sem{\neg((p\wedge q)\vee(p\wedge r))}^{\STD}_v = H.
\]
Therefore
\[
\sem{\SEPone}^{\STD}_v
=
\sem{p\wedge(q\vee r)}^{\STD}_v \cap \sem{\neg((p\wedge q)\vee(p\wedge r))}^{\STD}_v
=
P\cap H = P.
\]
Since $P\neq\{0\}$, we conclude that
\[
\sem{\SEPone}^{\STD}_v\neq\{0\}.
\]
Thus $\Sat_{\STD}^d(\SEPone)$.
\end{proof}

\section{Separation theorems}

\begin{theorem}[Explicit separation of \(\STD\) from \(\COM\)]\label{thm:std-not-implies-com}
For every $d\geq 2$ there exists a formula $\varphi$ such that
\[
\Sat_{\STD}^d(\varphi)
\qquad\text{and}\qquad
\neg\Sat_{\COM}^d(\varphi).
\]
More precisely, one may take $\varphi=\SEPone$.
\end{theorem}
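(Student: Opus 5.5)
The proof assembles two results already established, with $\varphi=\SEPone$ as the witness. Fix $d\geq 2$.

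The first step is the positive half. Theorem~\ref{thm:sep1-std-sat} gives $\Sat_{\STD}^d(\SEPone)$ for every $d\geq 2$, so we invoke it directly. Concretely, the valuation $p\mapsto\operatorname{span}(e_1+e_2)$, $q\mapsto\operatorname{span}(e_1)$, $r\mapsto\operatorname{span}(e_2)$ makes $\sem{\SEPone}^{\STD}_v=\operatorname{span}(e_1+e_2)\neq\{0\}$. The hypothesis $d\geq 2$ is used exactly here: it guarantees that $e_2$ exists.

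The second step is the negative half. Theorem~\ref{thm:sep1-com-unsat} gives $\neg\Sat_{\COM}^d(\SEPone)$ for every $d\geq 1$, hence in particular for our $d\geq 2$. Recall why: under any COM-admissible valuation, $p\wedge(q\vee r)$ and $(p\wedge q)\vee(p\wedge r)$ evaluate to the same projector $X$, because $A(B+C-BC)=AB+AC-ABC$ and $(AB)(AC)=ABC$. The whole formula then evaluates to $X(I-X)=0$.

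Combining the two steps gives both conjuncts of the statement for $\varphi=\SEPone$, which also proves the "more precisely" clause. There is no real obstacle, since all the work is in the cited theorems. The only points to check are that the dimension ranges match (the COM result holds for $d\geq 1$, which covers $d\geq 2$) and that the formula in both theorems is literally the same $\SEPone$, with all three atoms $p,q,r$ occurring. Optionally, one can then conclude $\mathsf{SAT}_{\COM}^d\subsetneq\mathsf{SAT}_{\STD}^d$ by combining this with the inclusion from Corollary~\ref{cor:implication-chain}.
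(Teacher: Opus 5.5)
Your proposal is correct and is the same argument as the paper's proof, which likewise just combines Theorem~\ref{thm:sep1-com-unsat} (valid for all $d\geq 1$) with Theorem~\ref{thm:sep1-std-sat} (valid for $d\geq 2$), taking $\varphi=\SEPone$.
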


\begin{proof}
Combine Theorems~\ref{thm:sep1-com-unsat} and~\ref{thm:sep1-std-sat}.
\end{proof}

\begin{theorem}[Explicit separation of \(\STD\) from \(\PBA\)]\label{thm:std-not-implies-pba}
For every $d\geq 2$ there exists a formula $\varphi$ such that
\[
\Sat_{\STD}^d(\varphi)
\qquad\text{and}\qquad
\neg\Sat_{\PBA}^d(\varphi).
\]
More precisely, one may take $\varphi=\SEPone$.
\end{theorem}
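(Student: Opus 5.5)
The plan is to take $\varphi=\SEPone$ and obtain the two required properties directly from results already proved in the previous section; nothing new needs to be computed.

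First, for satisfiability in the standard semantics, we invoke Theorem~\ref{thm:sep1-std-sat}. It states that for every $d\geq 2$ the valuation $p\mapsto\operatorname{span}(e_1+e_2)$, $q\mapsto\operatorname{span}(e_1)$, $r\mapsto\operatorname{span}(e_2)$ gives $\sem{\SEPone}^{\STD}_v=\operatorname{span}(e_1+e_2)\neq\{0\}$. This yields $\Sat_{\STD}^d(\SEPone)$.

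Second, for unsatisfiability in the partial-Boolean semantics, we invoke Theorem~\ref{thm:sep1-pba-unsat}, which holds for every $d\geq 1$ and hence for every $d\geq 2$. Its logic runs as follows. Lemma~\ref{lem:sep1-definedness} shows that definedness of $\sem{\SEPone}^{\PBA}_v$ forces $v(p),v(q),v(r)$ to commute pairwise, because the subformulas $q\vee r$, $p\wedge q$ and $p\wedge r$ must each be defined. Lemma~\ref{lem:com-to-pba-eval} then identifies the PBA value with the COM value. Theorem~\ref{thm:sep1-com-unsat} shows that this COM value is $X(I-X)=0$. Hence $\neg\Sat_{\PBA}^d(\SEPone)$.

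Combining the two gives the statement with $\varphi=\SEPone$. There is no real obstacle here. The only point to check is that the dimension ranges match: the satisfiability result needs $d\geq 2$, so that $e_1$ and $e_2$ exist, while the unsatisfiability result holds for all $d\geq 1$. The argument is therefore the same as the proof of Theorem~\ref{thm:std-not-implies-com}, with Theorem~\ref{thm:sep1-pba-unsat} in place of Theorem~\ref{thm:sep1-com-unsat}.
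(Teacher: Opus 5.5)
Your proposal is correct and is essentially the paper's own proof: it also takes $\varphi=\SEPone$ and simply combines Theorem~\ref{thm:sep1-std-sat} with Theorem~\ref{thm:sep1-pba-unsat}. Your check of the dimension ranges is the only bookkeeping needed, and you handle it correctly: $d\geq 2$ is required only for standard satisfiability, while PBA-unsatisfiability holds for all $d\geq 1$.
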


\begin{proof}
Combine Theorems~\ref{thm:sep1-pba-unsat} and~\ref{thm:sep1-std-sat}.
\end{proof}

\begin{corollary}[Comparison of satisfiability classes]\label{cor:comparison-classes}
For each fixed $d\geq 2$, let
\[
\mathsf{SAT}_{X}^d := \{\varphi : \Sat_X^d(\varphi)\}
\qquad (X\in\{\COM,\PBA,\STD\}).
\]
Then
\[
\mathsf{SAT}_{\COM}^d \subseteq \mathsf{SAT}_{\PBA}^d \subsetneq \mathsf{SAT}_{\STD}^d,
\qquad
\mathsf{SAT}_{\COM}^d \subsetneq \mathsf{SAT}_{\STD}^d.
\]
The only unresolved relation among these three classes is whether
\[
\mathsf{SAT}_{\COM}^d \subsetneq \mathsf{SAT}_{\PBA}^d
\]
or
\[
\mathsf{SAT}_{\COM}^d = \mathsf{SAT}_{\PBA}^d.
\]
\end{corollary}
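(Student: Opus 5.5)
The corollary follows by translating the earlier implication and separation theorems into statements about the sets $\mathsf{SAT}_{X}^d$. I will do this in three short steps.

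\textbf{Step 1: the two inclusions.} For every formula $\varphi$, Theorem~\ref{thm:com-implies-pba} gives $\Sat_{\COM}^d(\varphi)\Rightarrow\Sat_{\PBA}^d(\varphi)$. Hence $\mathsf{SAT}_{\COM}^d\subseteq\mathsf{SAT}_{\PBA}^d$. Likewise Theorem~\ref{thm:pba-implies-std} gives $\mathsf{SAT}_{\PBA}^d\subseteq\mathsf{SAT}_{\STD}^d$. Chaining the two, or citing Corollary~\ref{cor:implication-chain} directly, gives $\mathsf{SAT}_{\COM}^d\subseteq\mathsf{SAT}_{\STD}^d$. These hold for all $d\geq 1$.

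\textbf{Step 2: strictness.} For $d\geq 2$, I use the witness $\SEPone$. By Theorem~\ref{thm:sep1-std-sat}, $\SEPone\in\mathsf{SAT}_{\STD}^d$. By Theorems~\ref{thm:sep1-pba-unsat} and~\ref{thm:sep1-com-unsat}, $\SEPone\notin\mathsf{SAT}_{\PBA}^d$ and $\SEPone\notin\mathsf{SAT}_{\COM}^d$. These are exactly Theorems~\ref{thm:std-not-implies-pba} and~\ref{thm:std-not-implies-com} restated. So both inclusions into $\mathsf{SAT}_{\STD}^d$ are proper.

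\textbf{Step 3: the ``only unresolved relation'' sentence.} This part needs justification, not just citation. Among three classes there are three pairwise relations to settle:
\begin{itemize}
\item $\COM$ versus $\STD$ is fully determined as a proper inclusion.
\item $\PBA$ versus $\STD$ is fully determined as a proper inclusion.
\item $\COM$ versus $\PBA$ is known to satisfy $\subseteq$ by Step 1. So the only remaining possibilities are equality or proper inclusion.
\end{itemize}
The sentence asserts only that this third dichotomy is not decided here. It is therefore a statement about what has been established, not a mathematical claim requiring proof beyond pointing to Step 1.

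The one point that needs care is that the strictness claims use $d\geq 2$. For $d=1$, every projector is $0$ or $I$, all projectors commute, and the three semantics plausibly coincide. That case is correctly excluded by the hypothesis, so nothing further is needed.
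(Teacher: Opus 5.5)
Your proposal is correct and follows essentially the same route as the paper. The inclusions come from Corollary~\ref{cor:implication-chain}, and strictness comes from the witness $\SEPone$ via Theorems~\ref{thm:std-not-implies-pba} and~\ref{thm:std-not-implies-com}. You also treat the final sentence, as the paper does, as a report that the $\COM$ versus $\PBA$ dichotomy (equality or proper inclusion) has not been settled, rather than as a mathematical claim.
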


\begin{proof}
The inclusions
\[
\mathsf{SAT}_{\COM}^d \subseteq \mathsf{SAT}_{\PBA}^d \subseteq \mathsf{SAT}_{\STD}^d
\]
follow from Corollary~\ref{cor:implication-chain}.
Strictness of
\[
\mathsf{SAT}_{\PBA}^d \subsetneq \mathsf{SAT}_{\STD}^d
\]
and
\[
\mathsf{SAT}_{\COM}^d \subsetneq \mathsf{SAT}_{\STD}^d
\]
follows from Theorems~\ref{thm:std-not-implies-pba} and~\ref{thm:std-not-implies-com}, respectively.
No strictness result between $\COM$ and $\PBA$ has been established in this paper.
\end{proof}

\section{Discussion and open problem}

The comparison picture obtained here is already nontrivial.
The standard Hilbert-lattice semantics is strictly more permissive than both the global commuting semantics and the local partial-Boolean semantics, and the witness formula is completely explicit.
The proof is short because the separator isolates the distributive collapse that appears inside any commuting Boolean block while exploiting the genuine non-distributivity of the subspace lattice in the standard semantics.

What remains open is the relation between local commeasurability and global commuting realizability.
The definitions make it plausible that the local partial-Boolean semantics should be strictly more permissive than the global commuting semantics, but the present paper does not prove that.
Any such result will require a formula whose nodewise commutativity constraints do not already collapse the valuation into one global commuting family.

\begin{problem}
Fix $d\geq 2$.
Determine whether
\[
\mathsf{SAT}_{\COM}^d = \mathsf{SAT}_{\PBA}^d
\]
or
\[
\mathsf{SAT}_{\COM}^d \subsetneq \mathsf{SAT}_{\PBA}^d.
\]
Equivalently, determine whether there exists a formula $\psi$ such that
\[
\Sat_{\PBA}^d(\psi)
\qquad\text{and}\qquad
\neg\Sat_{\COM}^d(\psi).
\]
\end{problem}

\paragraph{Positioning.}
This paper should be read as a semantics-comparison note.
It does not claim a new generic reduction of fixed-dimensional standard quantum-logic satisfiability to algebraic feasibility; that terrain is already substantially occupied by Herrmann's modular-lattice complexity results~\cite{Herrmann2022}.
It does not claim a new generic field-to-lattice translation theorem; translation and definability technology already exist in Herrmann--Ziegler~\cite{HerrmannZiegler2018,HerrmannZiegler2019}.
It also does not identify the partial-Boolean semantics with the standard Hilbert-lattice semantics; the recent complexity results of Dawar--Shah concern the former, not the latter~\cite{DawarShah2026}.
The contribution here is narrower and cleaner: it proves precise one-way implications and explicit strict separations among the three fixed-dimension satisfiability notions currently under comparison.

\end{document}